\documentclass{article}
\usepackage{arxiv}
\usepackage[utf8]{inputenc} 
\usepackage[T1]{fontenc}    
\usepackage{hyperref}       
\usepackage{url}            
\usepackage{booktabs}       
\usepackage{amsfonts}       
\usepackage{nicefrac}       
\usepackage{microtype}      
\usepackage{graphicx}
\usepackage{natbib}
\usepackage{doi}
\usepackage{amsmath}
\usepackage{amsthm}
\usepackage{algorithm}
\usepackage{algpseudocode}
\usepackage{float}
\usepackage{enumitem}
\usepackage{graphicx}
\usepackage{epstopdf}
\usepackage{multirow}
\usepackage[flushleft]{threeparttable}

\newtheorem{lem}{Lemma}

\newcommand{\abs}[1]{\left\vert#1\right\vert}

\title{Maximum Cut Algorithms and Upper Bounds for
Planar and Toroidal Graphs}

\author{{Mark Glass}\\
	School of Electrical and Computer Engineering\\
	Tel Aviv University\\
	\texttt{markglass@mail.tau.ac.il} \\
	\And
	{Meir Feder} \\
	School of Electrical and Computer Engineering\\
	Tel Aviv University\\
	\texttt{meir@eng.tau.ac.il} \\
}

\renewcommand{\shorttitle}{Max-Cut Algorithms and Bounds for Planar and Toroidal Graphs}
\RequirePackage{algorithm}[0.1]

\hypersetup{
pdftitle={A template for the arxiv style},
pdfsubject={q-bio.NC, q-bio.QM},
pdfauthor={David S.~Hippocampus, Elias D.~Striatum},
pdfkeywords={First keyword, Second keyword, More},
}

\begin{document}
\maketitle

\begin{abstract}
We demonstrate that the problem of finding the maximum cut of a planar graph with arbitrary weights can be easily mapped to a minimum T-join problem in the absolute dual graph - the dual graph with absolute weights, as opposed to the known mapping to a maximum T-join problem with an empty set in the dual graph. By enabling the use of the shortest paths, this approach allows for the straightforward adaptation of the first efficient Max-Cut algorithm, designed by Hadlock in 1975 for planar graphs with non-negative weights, to handle the general case of planar graphs with arbitrary weights. Furthermore, we prove that applying a planar Max-Cut algorithm to a higher genus graph, such as a toroidal graph, while disregarding its topology, provides an upper bound for its maximum cut. Employing this methodology, we derive upper bounds for the maximum cut across all toroidal graphs within the GSet benchmark. We report that the known maximum cut values for part of those GSet toroidal problems including the three largest instances, which were previously documented in the literature, are the maximum possible because they match their upper bound values. Additionally, we introduce a novel heuristic algorithm for finding Max-Cut of toroidal graphs, which is based on the planar graph algorithm. Applying this algorithm to all seventeen toroidal Max-Cut problems in the GSet benchmark successfully reproduces all the best-known results, and for problem \#62, it yields a new, previously unknown best Max-Cut value.
\end{abstract}

\keywords{graphs \and planar \and toroidal \and algorithms \and maximum cut}

\section{Introduction}
The Maximum Cut (Max-Cut) problem of a graph is a foundational challenge in combinatorial optimization and graph theory.  Given an undirected graph $G = (V, E)$, the objective is to partition the vertex set $V$ into two disjoint subsets, $V_1$ and $V_2$, such that the total weight of edges connecting the two subsets is maximized. Alternatively, a Max-Cut can be specified by the edge set $E_{MC}$ comprising the cut ($E_{MC}\subseteq E$), and $V_{0,1}$ can be derived from it. Note that any partition of the graph nodes into two sets forms a valid cut but not any edge set represents a cut of the graph. While in the general case the problem is NP-hard, its mathematical structure provides a robust framework for modeling systems that seek to maximize "separation" or minimize "frustration" within a network.

In the field of VLSI design, Max-Cut is instrumental in optimizing physical layouts. During the routing phase of chip design, signals must often jump between different metal layers through "vias" (vertical interconnection areas). Minimizing these vias reduces manufacturing costs and improves signal integrity. It is known that via minimization task can be mapped to a Max-Cut formulation \cite{kups55018}.

One of the most profound connections between graph theory and physics lies in the study of Ising problem. In a magnetic system, "spins" (up or down) interact with their neighbors. In a frustrated system (a spin glass), not all interactions can be satisfied simultaneously. Finding the ground state—the configuration with the lowest total energy—is mathematically equivalent to solving a Max-Cut problem on the underlying graph of interactions.

In addition Max-Cut algorithms are very useful in computer vision and image processing (image segmentation and energy minimization), social network analysis, bio-informatics, distributed computing deployment, and logistics optimization.

The paper is organized as follows. Basic definitions and prior art are given in \ref{sec:basic} and \ref{sec:prior}, respectively. In \ref{sec:generalize} we describe the generalization of Hadlock' Max-Cut algorithm for the case of planar graphs with arbitrary weights. In \ref{sec:bound} and \ref{sec:gsetbound} we report on upper bounds for toroidal Max-Cut problems from GSet benchmark. Finally, in   \ref{sec:toroid} we describe a heuristic Max-Cut algorithm for toroidal graph and report its performance across GSet toroidal problems. Conclusions follow in \ref{sec:conclude}.

\section{Basic Definitions}
\label{sec:basic}

This article focuses exclusively on undirected graphs where every edge has a real weight and is part of at least one circuit. For edges that are not part of a circuit, the decision of whether to include them in the Max-Cut edge set is simply determined by the sign of their weight.

We define "positive/negative edges" as edges possessing positive/negative weights, respectively. The "sum of edges" refers to the summation of their weights. The "absolute value of an edge" is defined as the absolute value of its weight. An "odd/even circuit" denotes a circuit containing an odd/even number of edges, regardless of whether their weights are positive or negative. The terms "odd/even circuits" with respect to a specific set of edges signify circuits comprising an odd/even number of edges belonging to that set. An "odd/even positive circuit" is a circuit containing an odd/even number of positive edges. The absolute graph of a graph $G$, denoted by $|G|$, is a graph identical to $G$ except that all edge weights are replaced by their absolute values. The absolute dual graph $|D|$ of graph $G$ is defined as the dual of the absolute graph of $G$.

The symmetric difference of two sets (denoted by $\oplus$) is the union of the elements that belong to only one of the sets:
\begin{align*}
    A \oplus B = (A \cup B) \setminus (A \cap B)
\end{align*}
A number of useful properties of symmetric difference operation are given below:
\begin{itemize}
\item{$B \oplus B  = \emptyset$}
\item{$B \oplus \emptyset = B$}
\item{$(A \oplus B) \oplus C  = A \oplus (B \oplus C)$}
\item{$C = A \oplus B \iff A = C \oplus B$}
\end{itemize}

An edge set in a graph constitutes a cut if and only if every circuit within the graph contains an even number of edges from that set. This is a necessary and sufficient condition for partitioning the vertices into two valid subsets based on the given edge set.
However, this condition can be simplified by applying it only to the circuits in a basic set. Any other circuit in the graph can be expressed as a symmetric difference of a number of basic circuits, and, therefore, its parity is even if the parities of all basic circuits are even. This is because of the property that the symmetric difference of any number of even circuits is always even. For instance, in a planar graph, the face circuits of any planar embedding form a set of basic circuits.

T-join problem of a graph $G$ is defined as finding an edge subset $J$ such that a vertex in $G$ has an odd number of incident edges from $J$ if and only if that vertex belongs to the set $T$. Minimum T-join is a T-join with minimum sum of edges.

\section{Prior Art}
\label{sec:prior}

Max-Cut algorithm design remains an active research area. Recently, several studies have been published on this topic, with new best solutions reported for problems in the GSet benchmark (e.g., see \cite{Chimani_Dahn_2020, Zick23, Zick25, khan2025newbestknownmaxcutsolution}). The GSet is a standardized and widely used benchmark suite for testing algorithm and hardware performance on the Max-Cut problem. Introduced in the early 2000s, it continues to be essential for evaluating classical solvers, quantum annealers, and heuristic algorithms.

1n 1975 Hadlock \cite{Hadlock1975} presented the first polynomial-complexity algorithm for determining the maximum cut in a planar graph, which was originally designed for unweighted graphs and graphs with non-negative edge weights. The algorithm is based on the idea that the maximum cut edge set is the complement of a minimum odd circuits cover. This minimum odd circuits cover corresponds to a minimum odd vertex cover in the dual graph. This problem is then transformed into a minimum weight matching (MWM) problem, which can be solved using Edmonds' Blossom algorithm \cite{Edmonds1965}. This algorithm has a polynomial complexity of $O(n^3)$, where $n$ is the number of odd vertices. The requirement for non-negative weights is necessary because the shortest path search used in the process will encounter infinite loops if negative edges are present in the graph.

In 1990, Barahona \cite{Barahona90} and Mutzel \cite{Mutzel90} independently proposed reducing the Max-Cut problem to an empty set minimum T-join problem in the dual graph. This reduction is applicable even to the general case of a planar graph with arbitrary weights. Subsequent research by \cite{Bieche_1980, Barahona1982,Berman99,LiersPardella2012} introduced graph expansion techniques that allow for the Minimum Weight Matching (MWM) algorithm to be applied directly to the planar extended graph. This yields the minimum T-join solution by contracting the extended graph. This contrasts with the original Hadlock algorithm, where MWM is applied to a complete (non-planar) graph constructed from the shortest path lengths in the dual. The benefit of applying MWM directly to the extended planar graph is a potential theoretical complexity of $O(n^{3/2} \log n)$ \cite{Shih1990,Berman99,LiersPardella2012}. This can be achieved by means of using a divide and conquer approach rooted in the vertex separator theorem \cite{Lipton77, Shih1990}. Practically, the standard Blossom MWM algorithm can also be used on the planar extended graph, resulting in fewer computations compared to using it on a complete graph (see also \cite{Kolmogorov2009} and \cite{CombOpt2018}).

The following section demonstrates that solving the maximum cut problem for a planar graph with arbitrary weights is equivalent to finding a minimum T-join in its absolute dual graph. This equivalence is based on the concept of frustrated circuits with positive and negative edges that was originally described for the Ising problem on a square grid by  Bieche \cite{Bieche_1980} and Barahona \cite{Barahona1982}. We adapt this idea to the case of the maximum cut of a general planar graph.

\section{Generalization of Hadlock Max-Cut Algorithm for Planar Graphs with Arbitrary Weights}
\label{sec:generalize}

In this section we describe a simple generalization of the algorithm defined by Hadlock \cite{Hadlock1975} for the case of a planar graph with arbitrary weights. The concept behind it is to start with the all positive edge set, since the sum of those edges is an upper bound for Max-Cut. Naturally, if there are odd positive circuits, this edge set cannot be a cut of the graph. The idea is to convert this set to a cut of the graph by means of removing/adding edges with positive/negative weights and getting all circuits even w.r.t. the resulting set of edges. The intuition is that in order to get the Max-Cut the choice of which edges will be added/removed is determined by total sum of the absolute values of their weights. Let T be the vertex set in the absolute dual graph corresponding to odd positive circuits in the primal. Then, the minimum T-join in the absolute dual graph can be found in the same way as the odd vertices cover in the original algorithm - via finding shortest paths between each pair of vertices of interest and minimum weight matching of the complete graph. Then, the Max-Cut edge set in the primal graph is the symmetric difference of all positive edge set and edge set corresponding to minimum T-join edge set in the absolute dual. Note, that for the case of a planar graph with all non-negative edges, the algorithm is identical to the original one: the minimum T-join is identical to the minimum odd vertices cover, and the operation of symmetric difference with positive edge set is identical to the complement operation. Let's formulate that rigorously.

Let's denote by a "parity changer" of a circuit set an edge set such that the parity of any circuit edges contained in its symmetric difference with all positive edge set is different from the parity of the positive edges in the circuit if and only if the circuit is in the set.
Let's denote by "minimum parity changer" a parity changer with the minimum sum of absolute values of its edge weights.

Let's denote by "positive evenizer" of a graph an edge set such that its symmetric difference with the all positive edge set contains and even number of edges in any basic circuit of the graph. Essentially, the meaning of positive evenizer is to indicate which positive edges can be omitted from and which negative edges can be added to the set of all positive edges in order to make all basic (and as a consequence all other) circuits even w.r.t. the resulting edge set. Note, that for the case of all non-negative edges, the positive evenizer is an odd circuit cover which is defined in \cite{Hadlock1975}, and vice versa under condition that the complement of the odd circuit cover is a full cut. Since we are looking for minimum odd circuit covers eventually, one which is not a complement of a full cut is not a minimum one and, therefore, is not of interest. Also it is worth noting that positive evenizer is a parity changer of the set of odd positive basic circuits.

Let's denote by "minimum positive evenizer" a positive evenizer with the minimum sum of absolute values of its edge weights.
\begin{lem}\label{lem:cover}
An edge set is a cut of the graph if and only if its symmetric difference with all positive edge set is a positive evenizer.
\end{lem}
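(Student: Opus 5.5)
Let $P$ denote the set of all positive edges of the graph. For an arbitrary edge set $C$, put $X = C \oplus P$. By the last listed property of the symmetric difference, $X \oplus P = C$. So $C$ is the edge set obtained by taking the symmetric difference of $X$ with the all positive edge set.

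By definition, $X$ is a positive evenizer exactly when $X \oplus P$ contains an even number of edges of every basic circuit. Since $X \oplus P = C$, the lemma reduces to the following claim: $C$ is a cut if and only if every basic circuit contains an even number of edges of $C$.

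This claim is essentially the characterization of cuts stated in Section~\ref{sec:basic}. I would nonetheless spell out both directions, since the text there only sketches them.

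For the forward direction, let $C$ be the cut defined by a partition $V_1, V_2$. Walk around any circuit. Each edge of $C$ encountered switches the current side of the partition, and each edge outside $C$ keeps it. The walk ends at its starting vertex, so the number of switches is even. In particular, every basic circuit meets $C$ in an even number of edges.

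For the converse, first extend the parity condition from basic circuits to all circuits. Every circuit (more generally, every element of the cycle space) is a symmetric difference of basic circuits $B_1,\dots,B_k$. Then
\[
\abs{(B_1\oplus\cdots\oplus B_k)\cap C} \equiv \sum_i \abs{B_i\cap C} \pmod 2,
\]
because intersecting with $C$ distributes over $\oplus$ and $\abs{A\oplus B}\equiv\abs{A}+\abs{B} \pmod 2$. Hence every circuit contains an even number of edges of $C$.

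Next, construct the partition. In each connected component, fix a spanning tree and a root. Label a vertex $v$ with the parity of the number of $C$-edges on the tree path from the root to $v$. An edge $uv$ either lies in the tree or closes a fundamental circuit. In the first case, $u$ and $v$ receive different labels exactly when $uv\in C$. In the second case, the evenness of that fundamental circuit forces the same conclusion. So $C$ is exactly the set of edges whose endpoints carry different labels, i.e.\ $C$ is a cut.

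The only delicate point is the justification that the basic circuits generate all circuits under $\oplus$. For a planar embedding this holds for the face circuits, as the paper already asserts. In general, it suffices to take fundamental circuits with respect to a spanning tree, and the argument above uses only those.

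Combining the two directions with the substitution $X = C\oplus P$ proves Lemma~\ref{lem:cover}.
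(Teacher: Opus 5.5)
Your proof is correct and follows the paper's route: set $X = C \oplus P$ so that $X \oplus P = C$, then invoke the characterization of cuts as the edge sets meeting every (basic) circuit in an even number of edges. The only difference is that you actually prove this characterization, via the walk-around-a-circuit argument, the parity extension from basic circuits to all circuits, and the spanning-tree $2$-coloring. The paper instead takes it as given from Section~\ref{sec:basic} and says ``by definition'' for the step from basic circuits to all circuits.
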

\begin{proof}
Let C be a cut. The intersection of any circuit with C is even. Therefore, the edge set which is the symmetric difference of all positive edge set with C is a positive evenizer.

Conversely, if C is a symmetric difference of a positive evenizer with the all positive edge set, an intersection of C with any circuit contains even number of edges by definition and therefore C is a cut.
\end{proof}
Then we can state the following:
\begin{lem}\label{lem:maxcut}
The Max-Cut edge set of a graph is a symmetric differemce of all positive edge set with its minimum positive evenizer.
\end{lem}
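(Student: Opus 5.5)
The plan is to combine Lemma~\ref{lem:cover} with a direct computation of the weight of a cut written as a symmetric difference with the all positive edge set. Let $P$ denote the set of all positive edges and $w(S)=\sum_{e\in S} w(e)$ for an edge set $S$. By Lemma~\ref{lem:cover}, the map $X \mapsto P \oplus X$ is a bijection between positive evenizers and cuts of the graph. The inverse is $C \mapsto C \oplus P$, by the property $C = A\oplus B \iff A = C\oplus B$ listed in Section~\ref{sec:basic}. Hence maximizing $w(C)$ over cuts $C$ is the same as maximizing $w(P\oplus X)$ over positive evenizers $X$, and it remains to express this weight in terms of $X$.

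The key identity I will establish is
\begin{align*}
w(P\oplus X) = w(P) - \sum_{e\in X}\abs{w(e)}
\end{align*}
for every edge set $X$. Split $X$ into $X\cap P$ and $X\setminus P$. Then $P\oplus X = (P\setminus (X\cap P)) \cup (X\setminus P)$, and this union is disjoint. Therefore
\begin{align*}
w(P\oplus X) = w(P) - w(X\cap P) + w(X\setminus P).
\end{align*}
On $X\cap P$ the weights are positive, so $w(X\cap P)=\sum_{e\in X\cap P}\abs{w(e)}$. On $X\setminus P$ the weights are nonpositive, so $w(X\setminus P) = -\sum_{e\in X\setminus P}\abs{w(e)}$. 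Combining the two terms gives the identity.

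Since $w(P)$ is a constant, maximizing $w(P\oplus X)$ over positive evenizers is equivalent to minimizing $\sum_{e\in X}\abs{w(e)}$ over positive evenizers, which is the definition of a minimum positive evenizer. So if $X^*$ is a minimum positive evenizer, then $P\oplus X^*$ is a cut by Lemma~\ref{lem:cover}. Its weight $w(P)-\sum_{e\in X^*}\abs{w(e)}$ is at least the weight of any other cut $C$, because $C\oplus P$ is a positive evenizer. Thus $P\oplus X^*$ is a Max-Cut. Conversely, every Max-Cut arises this way, since its evenizer must then attain the minimum.

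There is no serious obstacle. The only points needing care are minor. Zero-weight edges are not in $P$ and contribute nothing either way, so they are harmless. The statement should be read as ``a'' Max-Cut when several optimal cuts exist. Finally, existence of a minimum follows from finiteness of the edge set and from nonemptiness of the family of positive evenizers, which holds because the empty cut gives $X=P$.
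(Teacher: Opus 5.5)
Your proposal is correct and follows essentially the same route as the paper: Lemma~\ref{lem:cover} pairs cuts with positive evenizers via $X \mapsto P\oplus X$, and the weight of the cut equals $w(P)-\sum_{e\in X}\abs{w(e)}$, so maximizing the cut is the same as minimizing the evenizer. Your version only spells out what the paper leaves implicit: the disjoint-union computation behind this identity, the converse direction, the handling of zero-weight edges, and why a minimum exists.
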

\begin{proof}
Following from Lem.\ref{lem:cover}, the value of a cut is equal to sum of all positive edges minus sum of absolute values of positive and negative edges which are in the corresponding positive evenizer. Therefore, minimum sum of absolute values of edges in the positive evenizer results in maximum cut, and vice versa.
\end{proof}

\begin{lem}\label{lem:abs}
An edge set is a minimum positive evenizer in $G$ if and only if the corresponding edge set in $\abs{G}$ is a minimum parity changer of circuits in $\abs{G}$, corresponding to odd positive basic circuits in G.
\end{lem}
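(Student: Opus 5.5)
The plan is to show that the two notions coincide as families of edge sets, and that the two objective functions coincide on them. Once the feasible families are equal and the weights agree, the minima agree, which gives the lemma.

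First I fix the identification. $G$ and $\abs{G}$ have the same vertices, edges and circuits, with the same incidences; only the weights differ. So an edge set $X$ in $G$ corresponds to the same edge set $X$ in $\abs{G}$. Let $P$ be the all positive edge set of $G$, and let $\mathcal{O}$ be the set of odd positive basic circuits of $G$, viewed as circuits of $\abs{G}$.

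Next I show that $X$ is a positive evenizer of $G$ if and only if it is a parity changer of $\mathcal{O}$. I do this by a parity computation on a single circuit $c$. For any circuit $c$ we have $\abs{(X\oplus P)\cap c}\equiv \abs{X\cap c}+\abs{P\cap c} \pmod 2$. Hence the parity of $(X\oplus P)\cap c$ differs from that of $P\cap c$ exactly when $\abs{X\cap c}$ is odd. I read the parity-changer condition, as in the paper's remark, with the all positive edge set $P$ of $G$ as reference: $X$ is a parity changer of $\mathcal{O}$ iff $\abs{X\cap c}$ is odd precisely for the circuits $c$ of $\mathcal{O}$.

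I then check this on basic circuits. A positive evenizer requires $\abs{(X\oplus P)\cap c}$ to be even for every basic circuit $c$. By the computation above, this holds iff $\abs{X\cap c}\equiv\abs{P\cap c}\pmod 2$, that is, iff $\abs{X\cap c}$ is odd exactly when $c\in\mathcal{O}$. So on basic circuits the two conditions are literally the same.

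The step I expect to be the main obstacle is reconciling the quantifier ranges. A parity changer is defined by a condition on all circuits, while a positive evenizer only constrains basic circuits. I would handle this with the paper's earlier argument that every circuit is a symmetric difference of basic circuits and that parities add under $\oplus$. This shows that the parity of $\abs{X\cap c}$ for an arbitrary circuit $c$ is the sum of its values on the basic circuits composing $c$. Consequently, specifying the behaviour on basic circuits (the set $\mathcal{O}$) determines it everywhere. I would therefore interpret "parity changer of the circuits corresponding to $\mathcal{O}$" as the condition imposed on the basic family, extended consistently by $\oplus$, and state this reading explicitly. If needed, I would note that when $\mathcal{O}$ is defined relative to a fixed basic set this extension is unique, so the two families of sets coincide.

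Finally I compare weights. The cost of $X$ as a positive evenizer in $G$ is $\sum_{e\in X}\abs{w(e)}$. Its cost as a parity changer in $\abs{G}$ is the sum of its weights there, and those weights are $\abs{w(e)}$ by the definition of $\abs{G}$. The objectives therefore agree, and over the same feasible family the minimizers coincide, proving both directions of the lemma.
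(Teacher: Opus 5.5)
Your proposal is correct and follows the paper's own argument: the paper's proof is the chain ``minimum positive evenizer in $G$ $\Leftrightarrow$ minimum parity changer of the odd positive basic circuits in $G$ $\Leftrightarrow$ minimum parity changer of the corresponding circuits in $\abs{G}$'', and you supply the parity computation and the equality of the cost $\sum_{e\in X}\abs{w(e)}$ that the paper leaves implicit. Your explicit restriction of the parity-changer condition to the basic circuits is the reading the paper relies on without stating it. Also, since $\abs{(X\oplus R)\cap c}\equiv\abs{X\cap c}+\abs{R\cap c} \pmod 2$ for any reference set $R$, it does not matter whether the positive edge set of $G$ or of $\abs{G}$ is used as the reference.
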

\begin{proof}
A minimum positive evenizer in G is equivalent to the minimum parity changer of the set of odd positive basic circuits in G, which is equivalent to the minimum parity changer of the circuit set in $\abs{G}$ corresponding to the odd positive basic circuits in G.
\end{proof} 
\begin{lem}\label{lem:dual}
An edge set is a minimum parity changer of a circuit set in $\abs{G}$ if and only if the corresponding edge set is the minimum T-join of corresponding vertex set T in $\abs{D}$.
\end{lem}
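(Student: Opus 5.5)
The plan is to use planar duality between faces and vertices. We fix a planar embedding of $\abs{G}$, so that the face circuits form the basic circuit set (as noted in Section~\ref{sec:basic}). Each face $f$ corresponds to a vertex $f^*$ of $\abs{D}$, and each edge $e$ corresponds to a dual edge $e^*$ with the same absolute weight. An edge $e$ lies on the boundary of face $f$ if and only if $e^*$ is incident to $f^*$. A bridge-like edge appearing twice on one face would be an exception, but it does not occur: by the standing assumption every edge lies on a circuit, so each edge borders two distinct faces. For an edge set $J \subseteq E$, write $J^*$ for the corresponding dual edge set. Set $T=\{f^* : f \text{ is in the given circuit set}\}$; in our application this is the set of odd positive face circuits of $G$.

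The key step is a local parity identity. For any face $f$, $\abs{J \cap \partial f}$ equals the degree of $f^*$ in $J^*$. Next, take the all positive edge set $P$ and write $\abs{(P\oplus J)\cap\partial f} \equiv \abs{P\cap \partial f} + \abs{J\cap\partial f} \pmod 2$. So the parity of face $f$ with respect to $P\oplus J$ differs from its parity with respect to $P$ exactly when $\deg_{J^*}(f^*)$ is odd. Hence $J$ is a parity changer of the circuit set (restricted to basic face circuits) if and only if the odd-degree vertices of $J^*$ are exactly $T$, that is, if and only if $J^*$ is a $T$-join in $\abs{D}$.

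The bijection $J\leftrightarrow J^*$ preserves the sum of absolute weights, because dual edges carry the weights $\abs{w(e)}$. It follows that minimizers correspond to minimizers, which gives both directions of the lemma at once.

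The main obstacle is the scope of "parity changer". The definition quantifies over all circuits, while duality naturally handles only faces. I would resolve this as in Section~\ref{sec:basic}. Every circuit is a symmetric difference of face circuits, and parity is additive under $\oplus$. So a set that changes parity exactly on the prescribed faces determines the parity change of every other circuit consistently. The circuit set in the statement should therefore be read as a set of basic (face) circuits, with non-basic circuits inheriting their behaviour. I would state this convention explicitly, together with the fact that the outer face is included as a dual vertex, so that $\abs{T}$ is even and a $T$-join exists. The evenness holds because the sum of all face parities counts each edge twice.
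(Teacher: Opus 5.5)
Your proposal is correct and is essentially the argument the paper defers to via Hadlock's Theorem~2. Face--vertex incidence in the plane dual makes the parity of $\abs{J\cap\partial f}$ equal to the degree parity of $f^*$ in $J^*$, so parity changers of the prescribed faces are exactly the $T$-joins in $\abs{D}$, and the weight-preserving bijection $J\leftrightarrow J^*$ carries minimizers to minimizers; your explicit treatment of the scope of ``parity changer'' (read over face circuits, with other circuits inheriting parity via $\oplus$) and of the outer face (giving $\abs{T}$ even) fills in details the paper leaves implicit.
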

\begin{proof}
The proof is analogous to the one of Theorem 2 in \cite{Hadlock1975}.
\end{proof}

\begin{lem}\label{lem:mc_arb}
An edge set is a Max-Cut of  graph $G$ if and only if its symmetric difference with all positive edge set corresponds to the minimum T-join in the absolute dual graph $\abs{D}$, where T is vertex set corresponding to odd positive circuit set in $G$.
\end{lem}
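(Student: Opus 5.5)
The plan is to obtain Lemma~\ref{lem:mc_arb} by chaining the equivalences already established. Each link is an ``if and only if'' statement about one fixed edge set, viewed in $G$, in $\abs{G}$, and in $\abs{D}$ through the natural edge correspondences. Let $P$ be the all positive edge set of $G$. For an edge set $C$ put $X = C \oplus P$. By the symmetric difference property $C = A\oplus B \iff A = C\oplus B$, we also have $C = X \oplus P$, so specifying $C$ is the same as specifying $X$.

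First, by Lemma~\ref{lem:maxcut}, together with its proof, $C$ is a Max-Cut if and only if $X$ is a minimum positive evenizer. For the converse direction I will reread the proof of Lemma~\ref{lem:maxcut}. By Lemma~\ref{lem:cover}, cuts are in bijection with positive evenizers via $\oplus P$. The value of the cut $C$ equals the sum of all positive edges minus the sum of the absolute values of the edges of $X$. Hence maximizers on one side correspond exactly to minimizers on the other. Second, by Lemma~\ref{lem:abs}, $X$ is a minimum positive evenizer in $G$ if and only if the corresponding edge set in $\abs{G}$ is a minimum parity changer of the circuits of $\abs{G}$ that correspond to the odd positive basic circuits of $G$. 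Third, by Lemma~\ref{lem:dual}, this holds if and only if the corresponding dual edge set is a minimum T-join in $\abs{D}$. Here $T$ is the set of dual vertices, i.e.\ faces of the planar embedding, corresponding to those circuits. Composing the three equivalences gives the statement.

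The one point needing care is the identification of $T$. Lemma~\ref{lem:abs} speaks of odd positive \emph{basic} circuits, whereas the statement of Lemma~\ref{lem:mc_arb} says ``odd positive circuit set''. I will fix the basic circuits to be the face circuits of the chosen planar embedding, as indicated in Section~\ref{sec:basic}. Then the faces are exactly the vertices of $\abs{D}$, and $T$ is the set of faces whose boundary contains an odd number of positive edges. I would remark that the outer face is included among the face circuits. Its parity is forced by the others, since the boundaries of all faces together use every edge twice, so $\abs{T}$ is even and a T-join exists. I will also note that weights enter only as absolute values from the second step onward, which is why the dual is $\abs{D}$ rather than $D$.

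I expect the main obstacle to be justifying that minimality is preserved along the chain. This requires that the costs coincide: the sum of absolute weights of $X$ in $G$ must equal the weight of the corresponding set in $\abs{G}$, and that in turn must equal the T-join weight in $\abs{D}$. The first two are equal by the definition of $\abs{G}$, and the last two by the dual edge bijection. Since each correspondence is a weight-preserving bijection between the feasible sets, minimizers map exactly to minimizers, and the ``if and only if'' holds in both directions.
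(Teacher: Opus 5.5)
Your proposal is correct and is the paper's proof: the paper simply chains Lemmas~\ref{lem:cover}, \ref{lem:maxcut}, \ref{lem:abs} and \ref{lem:dual}. Your extra remarks make explicit what the paper leaves implicit: the converse of Lemma~\ref{lem:maxcut} read off from its proof, $T$ taken as the odd positive faces (outer face included, with $\abs{T}$ even), and the edge correspondences preserving absolute weights so that minimizers map to minimizers.
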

\begin{proof}
Follows directly from combination of Lemmas \ref{lem:cover},\ref{lem:maxcut},\ref{lem:abs},\ref{lem:dual}.
\end{proof}

The minimum T-join in the absolute dual graph can be found efficiently using minimum weight matching algorithm. The resulting algorithm is as follows. Given the planar graph G with arbitrary weights, identify odd positive face circuits of G, construct $\abs{G}$ which is an absolute graph of $G$ and its dual $\abs{D}$ (if multiple edges connect the same pair of vertices - select the one with minimum absolute value only), and find a minimum T-join in $\abs{D}$ of the vertex set corresponding to odd positive basic circuits in $G$. The latter stage can be done in the same way as in the original algorithm by \cite{Hadlock1975}, since the absolute dual $\abs{D}$ contains non-negative edges only. The Max-Cut edge set, then, is the symmetric difference of all positive edge set of the graph and the edge set corresponding to minimum T-join in the absolute dual. The top level of the planar Mux-Cut algorithm is given in Alg.\ref{algo1}. 
\begin{algorithm}[H]
\caption{Finding Max-Cut of a Planar Graph with Arbitrary Weights}
\label{algo1}
\begin{algorithmic}[1] 
\Require $G=\{E,V\}$ \Comment{graph edges and vertices}
\Ensure $E_{MC} \subseteq E$ \Comment {Max-Cut}
\State Find odd positive basic circuits $C$ of $G$. \Comment{face circuits with odd \# of pos. edges}
\State Get absolute graph $\abs{G}$. \Comment{Same as $G$ but with absolute values of the weights}
\State Get $\abs{D}$ (if multiple edges connect the same pair of vertices - select the one with minimum absolute value only). \Comment{the dual graph of $\abs{G}$}
\State Identify vertices $T$ in $\abs{D}$ corresponding to $C$ in $G$.
\State Build $M$ - a list containing shortest paths in $\abs{D}$ between all possible pairs of vertices in $T$.
\State Build matrix $P$, containing
lengths of all paths in $M$.
\State Build matrix $Q$ according to:
$Q=max(P)+1-P$.
\State Calculate maximum weight matching of the complete graph $Q$ using Blossom algorithm, which gives the minimum weight matching (MWM) of $P$.
\State Build set $S$ which is the union of all edges in $\abs{D}$ from paths in $M$ corresponding to MWM.
\State Build set $R$ of edges in $G$, corresponding to edges in $S$.
\State Max-Cut edge set is the symmetric difference of $R$ and all positive edge set.
\end{algorithmic}
\end{algorithm}

The comparison between the original Hadlock's algorithm and the new generalized algorithm is given on Fig.\ref{fig:maxcut_algo}. One can see that the difference between the two algorithms is minor. Also, in the case of a planar graph with all non-negative weights, the generalized algorithm is equivalent to the original one.
\begin{figure}[H]
    \centering
    \includegraphics[width=0.9\linewidth]{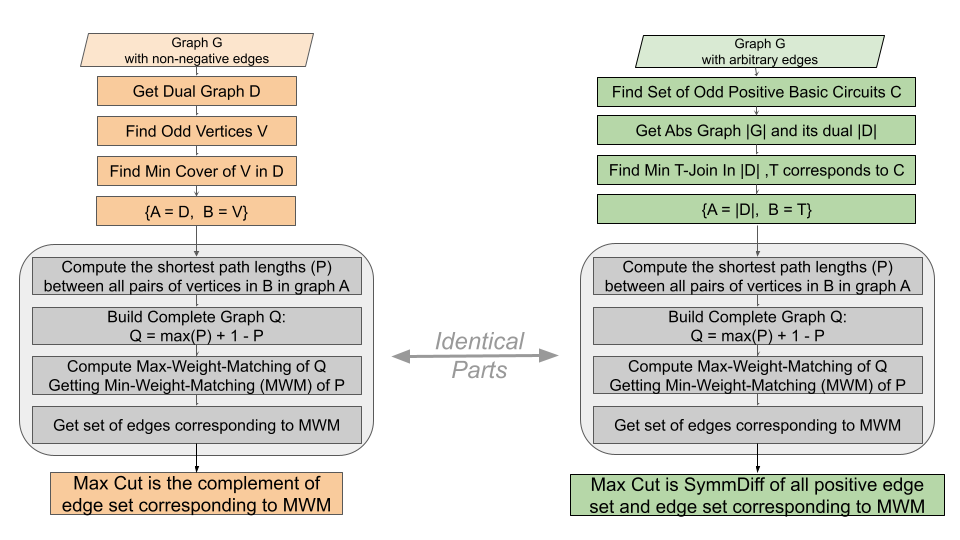}
    \caption{Workflow comparison of finding Max-Cut: the original algorithm by Hadlock \cite{Hadlock1975} for planar graphs with non-negative weights (left) and the new generalized algorithm for planar graphs with arbitrary weights (right).}
    \label{fig:maxcut_algo}
\end{figure}
An example of the algorithm specified above applied to a simple graph is given on Fig.\ref{fig:Example1}. On the resulting diagram (VI) the Max-Cut edge set is indicated by  thick lines.
\begin{figure}[H]
    \centering
    \includegraphics[width=1.0\linewidth]{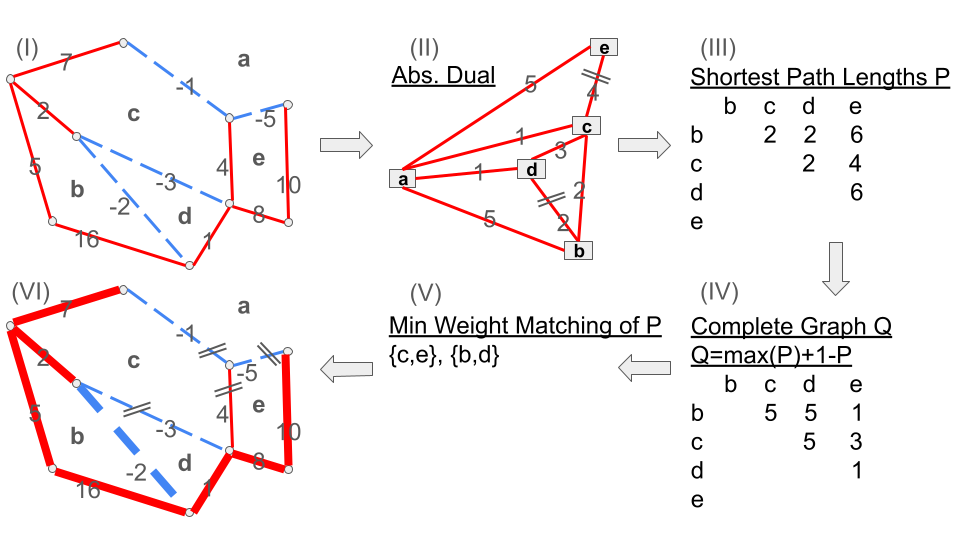}
    \caption{Example of Max-Cut algorithm applied to a planar graph with arbitrary weights. (I) The given graph, the odd positive circuits are all face circuits: b,c,d,e. (II) Absolute dual graph. (III) Shortest path lengths complete graph P. (IV) Complete graph Q with shortest path lengths subtracted from their maximum plus 1. (V) Minimum weight matching of vertices in complete graph P. (VI) Max-Cut edges indicated by thick lines.} 
    \label{fig:Example1}
\end{figure}

\section{Upper Bound for Max-Cut of Toroidal Graphs}
\label{sec:bound}
In a toroidal graph the face circuits and the two additional circuits - vertical and horizontal basic circuits (VBC and HBC), as depicted on Fig.\ref{fig:toro}, comprise the set of basic circuits of the graph (see \cite{Bieche_1980, Barahona1982}). Any other circuit can be expressed as a symmetric difference of a number of circuits from this basic set. According to Lem.\ref{lem:maxcut}, the Max-Cut edge set of a toroidal graph $G$ is a symmetric difference of all positive edge set and the minimum positive evenizer. Also the minimum positive evenizer is equivalent to the minimum parity changer of the odd positive basic circuits of $G$.

\begin{lem}\label{mcbound}
Applying a planar graph maximum cut algorithm to a toroidal graph, without accounting for its higher genus topology, provides an upper bound for its maximum cut.
\end{lem}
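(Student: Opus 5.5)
The argument is a relaxation: running the planar algorithm on a toroidal graph solves the Max-Cut problem with some parity constraints dropped. By Lem.~\ref{lem:cover} and Lem.~\ref{lem:maxcut}, the Max-Cut value of a toroidal graph $G$ equals the sum of all positive edges minus the minimum, over all positive evenizers $X$, of $\sum_{e\in X}|w(e)|$. A positive evenizer of the toroidal graph must fix the parity of \emph{every} basic circuit: all face circuits plus the two non-contractible circuits VBC and HBC. The planar algorithm, applied to the toroidal embedding while ignoring its genus, does the following:
\begin{itemize}
\item it finds the odd positive face circuits;
\item it forms the absolute dual $\abs{D}$, in which every face is a vertex and every edge is dual to a primal edge;
\item it returns a minimum T-join $J$ in $\abs{D}$, with $T$ the set of odd positive faces.
\end{itemize}
I would show that the primal edge set corresponding to $J$ is a minimum-weight set that fixes the parity of the face circuits only. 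The algorithm's output value, namely the sum of positive edges minus the weight of $J$, is then the optimum of a relaxed problem and hence is at least the true maximum cut.

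The key steps, in order, are these.
\begin{enumerate}
\item For any edge set $X$ and any face $f$, the parity of $|X\cap \partial f|$ equals the degree parity of $f$ in the dual edge set $X^*$, since each edge lies on the boundaries of exactly two faces. This is purely local and holds on the torus exactly as in the plane; it is the content of Lem.~\ref{lem:dual}, which I would re-derive here without using planarity.
\item From step 1, an edge set changes the parity of exactly the odd positive face circuits if and only if its dual is a T-join in $\abs{D}$. So the algorithm minimizes $\sum|w|$ over the set $\mathcal{F}$ of all \emph{face-parity changers}.
\item Every positive evenizer of the toroidal graph lies in $\mathcal{F}$, since it must in particular handle every face circuit. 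Hence $\min_{\mathcal{F}} \le \min$ over positive evenizers.
\item Substituting into the expression from Lem.~\ref{lem:maxcut} gives
\[
\text{planar value}=\sum_{w(e)>0}w(e)-\min_{\mathcal{F}}\;\ge\;\sum_{w(e)>0}w(e)-\min_{\text{evenizers}}=\text{Max-Cut}(G).
\]
\end{enumerate}

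I expect the main obstacle to be step 1 on the torus, and making clear that Lem.~\ref{lem:dual} transfers even though the shortest-path and matching machinery was stated for planar duals. The T-join and matching part only uses the fact that $\abs{D}$ has non-negative weights, so it yields a minimum T-join in any graph. The face/dual-degree correspondence uses only the cellular embedding, where each edge borders two faces, possibly the same face twice. A loop in the dual contributes even degree, consistent with that edge having even multiplicity in the face boundary, so this case is harmless.

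I would also remark on why the inequality can be strict. A set $X\in\mathcal{F}$ may leave VBC or HBC odd with respect to $X\oplus E^+$, where $E^+$ is the all positive edge set. In that case $X\oplus E^+$ is not a cut, and its value, although computed, need not be attained by any cut. This explains why the output is only an upper bound.
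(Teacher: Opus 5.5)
Your proposal is correct and takes essentially the same approach as the paper. Both argue by relaxation: a minimum positive evenizer of the toroidal graph is in particular a parity changer of the odd positive face circuits alone, and the planar algorithm minimizes over that larger family, so its value bounds the Max-Cut from above. Your explicit check that the face/dual-degree correspondence behind Lem.~\ref{lem:dual} holds on the torus fills in a step the paper takes for granted.
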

\begin{proof}
In a toroidal graph, the Max-Cut edge set is the symmetric difference of the positive edge set and a minimum positive evenizer, let's denote it by S. Following the definition of $S$, which is also a minimum parity changer of odd positive basic circuits of the graph, it is also correct to say that this edge set S is in the group of the edge sets which are parity changers of odd positive face circuits only (disregarding VBC and HBC). The minimum sum of edge weights absolute values for an edge set from this group is found by the Max-Cut algorithm for planar graphs. In other words, the minimum parity changer of odd positive face circuits is the symmetric difference of all positive edge set and edge set given by the planar Max-Cut algorithm applied to the toroidal graph. Therefore the value of Max-Cut of a toroidal graph cannot be higher than the value given by the algorithm for finding Max-Cut of a planar graph, applied to the toroidal graph disregarding its topology.
\end{proof}

This claim is true for graphs of higher genera as well. We conclude that for a graph of any genus we can calculate an upper bound of its Max-Cut by means of applying a planar graph Max-Cut algorithm.
\begin{figure}[H]
    \centering    \includegraphics[width=1.0\linewidth]{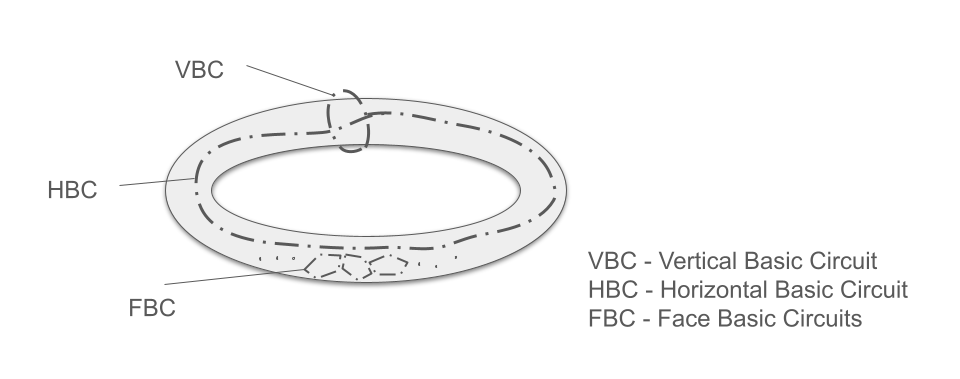}
    \caption{Basic circuits of a toroidal graph: VBC, HBC and FBC} 
    \label{fig:toro}
\end{figure}
\section{Max-Cut Bound for Toroidal GSet Problems}\label{sec:gsetbound}
The upper bounds calculated by applying planar Max-Cut algorithm to toroidal GSet benchmark problems, are given in Tbl.\ref{mcbounds}.
Here the run times refer to calculation of the main part of planar Max-Cut algorithm, which is the Blossom algorithm for finding MWM. The MWM was tested on the complete (Full) input graph with all vertices connected to each other, and on a graph containing only edges with weights not greater than 4 (Partial), in both cases the resulting bound was the same for each toroidal GSet problem (the length of the longest shortest paths used by MWM in the full case is 4). The run times were measured on a home PC equipped with Intel I5 2.6GHz processor and 32GB RAM, except the full calculation of the last largest problem which was done on a stronger station. Note, that the measured run times of the full graph are approximately proportional to the 3rd power of number of odd nodes in the problems, which is consistent with its polynomial complexity order. The known Max-Cut results are taken from \cite{GSetGithub, Zick23, Zick25}.

\section{Max-Cut of a Toroidal Graph}
\label{sec:toroid}
This section details a heuristic algorithm giving an approximated solution to Max-Cut problems in toroidal graphs with arbitrary weights. The approach utilizes the planar Max-Cut algorithm, incorporating an additional step to address the evenness condition of a vertical basic circuit (VBC), while the evenness condition of a horizontal basic circuit (HBC) is satisfied implicitly during the algorithm's final stage.\\
We begin with applying the planar Max-Cut algorithm (Alg.\ref{algo1}) to the toroidal graph. Then we select a random VBC and satisfy its evenness condition w.r.t. the edge set corresponding to MWM output using the following procedure:
\begin{enumerate}
    \item Find the Minimum Weight Matching (MWM) of vertices in the absolute dual of the toroidal graph corresponding to its odd positive face circuits, same as in Max-Cut algorithm (this also can be formulated as minimum T-join problem).
    \item Check the parity of shortest path edges from the MWM crossing the VBC.
    \item If the parity is the same as the parity of the VBC positive edges, do one of the following two changes to MWM input shortest paths list:
    \begin{itemize}
        \item No changes are made or
        \item Remove two longest of the shortest paths crossing the VBC, fix the rest.
    \end{itemize}
    \item If the parities of the crossing edges and VBC postive edges are different, do one of the following two changes to MWM input shortest paths list:
    \begin{itemize}
        \item Remove the longest of the shortest paths crossing the VBC or
        \item Remove the second longest of the shortest paths crossing the VBC,
    \end{itemize}
    and fix the rest of shortest paths crossing the VBC.
\end{enumerate}
The Max-Cut algorithm then proceeds starting from MWM until completion.\\
The procedure described above yields an edge set that satisfies the evenness conditions for all face circuits and the VBC.
However, this edge set does not necessarily represent a valid cut of the toroid because the evenness condition of the HBC was not considered.
We then color the nodes into two colors based on this edge set, stopping when all nodes are colored and not necessarily all edges from the set are used. Since any partition of all graph nodes into two groups constitutes a valid cut of the graph, we use the total weight of edges connecting the two colored groups as a candidate for Max-Cut value.
This toroidal Max-Cut algorithm is summarized in Alg.\ref{ToroidMC}.
\begin{table}
\caption{Known results, upper bounds and our results for Max-Cut in toroidal GSet problems, run times of the minimum weight matching (MWM) - the main part of the planar Max-Cut algorithm in terms of computational complexity, run times of our toroidal Max-Cut algorithm. The tests were done on a home PC equipped with Intel I5 2.6GHz processor and 32GB RAM.}
\label{mcbounds}
\small
\begin{center}
\begin{tabular}{|c|c|c|c|c|c|c|c|c|c|c|}
\hline
\multirow{ 2}{*}

   GSet & \#Nodes /        & \#Odd        & Known       &  Upper        &  \multicolumn{2}{c|}{MWM time [sec]}   &Our                   &Time      \\\cline{6-7}
    \#  & Diameter         & Circuits     & Result      &  Bound        &  Full     & Partial                    &Result                &[sec]     \\\hline
11      & 800/8            & 434          & \bf564      &  \bf564       &     8     &  1                         &\bf564                &2         \\\hline
12      & 800/16           & 394          & 556         &  \bf558       &     6     &  1                         &556                   &44        \\\hline
13      & 800/32           & 384          & 582         &  \bf583       &     4     &  1                         &582                   &31        \\\hline
32      & 2E3/20           & 1014         & 1410        &  \bf1414      &     86    &  2                         &1410                  &156       \\\hline
33      & 2E3/25           & 1004         & 1382        &  \bf1384      &     98    &  2                         &1382                  &181       \\\hline
34      & 2E3/40           & 975          & 1384        &  \bf1386      &     79    &  2                         &1384                  &208       \\\hline
48      & 3E3/60           & 0            & \bf6000     &  \bf6000      &     0(*)  &  0(*)                      &\bf6000               &7         \\\hline
49      & 3E3/100          & 0            & \bf6000     &  \bf6000      &     0(*)  &  0(*)                      &\bf6000               &7         \\\hline
50      & 3E3/120          & 0            & 5880        &  \bf6000      &     0(*)  &  0(*)                      &5880                  &7         \\\hline
57      & 5E3/50           & 2520         & \bf3494     &  \bf3494      &     1609  & 11                         &\bf3494               &596       \\\hline
62      & 7E3/70           & 3488         & 4870        &  \bf4872      &     4960  & 24                         &\bf\underline{4872}   &72        \\\hline
65      & 8E3/80           & 4002         & \bf5562     &  \bf5562      &     7156  & 33                         &\bf5562               &753       \\\hline
66      & 9E3/100          & 4456         & 6364        &  \bf 6366     &     9853  & 36                         &6364                  &7650      \\\hline
67      & 10E3/100         & 5020         & 6950        &  \bf6952      &     15495 & 50                         &6950                  &9632      \\\hline
72      & 10E3/100         & 5000         & \bf7008     &  \bf7008      &     15057 & 51                         &\bf7008               &147       \\\hline
77      & 14E3/100         & 7016         & \bf9940     &  \bf9940      &     44507 & 103                        &\bf9940               &287       \\\hline
81      & 20E3/100         & 9978         & \bf14060    &  \bf14060     &    N/A\tnote{1}(**) & 212              &\bf14060              &14651     \\\hline
\end{tabular}
\begin{tablenotes}
    \item[1] (*) There are no odd positive face circuits in problems \#48,49,50, therefore MWM algorithm is not invoked.
    \item[2] (**) The last problem of the largest size was tested on a different computer - a stronger one, due to increased usage of RAM, therefore the runtime is irrelevant for comparison.
  \end{tablenotes}\end{center}
\end{table}

\begin{algorithm}
\caption{Finding Max-Cut of a Toroidal Graph with Arbitrary Weights}
\label{ToroidMC}
\begin{algorithmic}[1] 
\Require $G=\{E,V\}$ \Comment{Graph edges and vertices}
\Ensure $E_{MC} \subseteq E$ \Comment {Max-Cut}

\State Run planar Max-Cut algorithm until getting MWM.  \Comment{Alg.\ref{algo1}}
\State Find a vertical basic circuit (VBC) of the toroid.
\If{The parities of the shortest paths  edges in MWM crossing the VBC and VBC positive edges are the same}
       \State Either do nothing or remove two longest shortest crossing paths from MWM input and fix the remaining crossing paths.
\Else \State{Remove one of two longest shortest paths crossing the VBC and fix the rest.}
\EndIf
\State Continue planar Max-Cut algorithm until completion, getting the edge set $E_0$.
\State Color the graph's nodes using two colors, guided by the edge sets $E_0$ and $\overline{E_0}$. The coloring process is finished when every node has been colored, regardless of whether all edges have been processed. Specifically, for any given edge, its connected nodes must share the same color if and only if that edge is in $\overline{E_0}$.
\State Calculate the sum of all edges connecting nodes from different color groups and take it as a candidate for Max-Cut value.
\State Rerun the algorithm with a different VBC until a number of candidates for Max-Cut is found, select the maximum one as the answer.
\end{algorithmic}
\end{algorithm}

The toroidal Max-Cut algorithm, as specified here, was applied to all seventeen toroidal problems in the GSet benchmark. For each problem, we utilized twenty four VBCs with random starting indices. The algorithm was terminated upon finding a cut value equal to the upper bound. The results were consistently at least as good as the best known results reported in the literature, and for problem \#62, a new, previously unknown best value was discovered. Run times varied from a few seconds to several hours. \ref{mcbounds} summarizes the details of this test.

\section{Conclusions}
\label{sec:conclude}
We conducted a thorough review of Maximum Cut (Max-Cut) algorithms for planar graphs. Hadlock's original Max-Cut algorithm \cite{Hadlock1975} was previously thought to only apply to planar graphs with non-negative edge weights. We proved that finding the Max-Cut of a planar graph with arbitrary weights can be reduced to a problem of finding a minimum T-join in a planar graph with non-negative weights (T-join definition can be found \ref{sec:basic}). This key finding enabled us to design a simple adaptation of the original algorithm by Hadlock \cite{Hadlock1975} to the more general case of planar graphs with arbitrary weights. Furthermore, we showed that a Max-Cut algorithm designed for planar graphs can be applied to non-planar graphs to determine upper bounds for their Max-Cut values. This technique can be particularly valuable for evaluating the performance of different Max-Cut solvers on toroidal and higher genus graphs. We used this methodology to calculate upper bounds for GSet toroidal problems and discovered that the best known Max-Cut values for several of them, including the three largest ones, are best possible because they matched their upper bounds. Finally, we introduced a heuristic algorithm for finding approximated Max-Cut in toroidal graphs, which is derived from the planar graph algorithm, and documented its superior performance on the toroidal problems within the GSet benchmark. This algorithm found solutions with all the  best known Max-Cut values and a new best  Max-Cut solution for problem \#62, which had not been previously known (Tbl.\ref{mcbounds}).
We identify the following directions as potentially promising for future research:
\begin{itemize}
    \item {Minimum T-join Problem}: developing a faster algorithm for the minimum T-join problem in graphs with non-negative weights. Specifically, implementing a method to cancel a portion of the shortest paths (e.g., those longer than a certain threshold) in the complete graph used for Minimum Weight Matching (MWM). This simplification could significantly reduce MWM runtime and benefit all approaches to finding the maximum cut in planar graphs.
    \item {Maximum Cut Algorithms and Vertex Separation}: advancing maximum cut algorithms by utilizing the vertex separation theorem. Further research could reduce the polynomial order of complexity for maximum cut solutions, allowing for the computation of maximum cuts for larger problem instances.
    \item {Max-Cut Upper Bound for Non-Planar Graphs}: the Max-Cut upper bound calculated using the planar algorithm described here can serve as a bound criterion for branch-and-bound type solvers applied to the Max-Cut problem in non-planar graphs.
\end{itemize}

\section*{Acknowledgments}
We thank Khen Cohen and Prof. Yaron Oz for fruitful discussions on the topics of Ising problem and graph theory.

\section*{Code Availability}
The Python code scripts and GSet problem data used in this study, including files for Max-Cut solution, verification, bounds calculation, and results, are available at: 
\href{https://github.com/mrkgs/MaxCutPlanarAndToroidal} {https://github.com/mrkgs/MaxCutPlanarAndToroidal}

\bibliographystyle{unsrt}

\end{document}